\newcommand{\dt}{.}
\newtheorem{thm}{Theorem}
\newtheorem{lem}[thm]{Lemma}
\newcommand{\sa}{synchronizing automata}
\newcommand{\san}{synchronizing automaton}
\newcommand{\csw}{c.s.w}
\newcommand{\csws}{carefully synchronizing words}
\newcommand{\sws}{synchronizing words}
\newcommand{\ssws}{synchronizing words of minimum length}
\DeclareSymbolFont{rsfscript}{OMS}{rsfs}{m}{n}
\DeclareSymbolFontAlphabet{\mathrsfs}{rsfscript}
\newcommand{\mA}{\mathrsfs{A}}
\newcommand{\mB}{\mathrsfs{B}}
\begin{document}
\title{Using Sat solvers for synchronization issues\\ in partial deterministic automata\thanks{Supported by the Ministry of Science and Higher Education of the Russian Federation, projects no.\ 1.580.2016 and 1.3253.2017, and the Competitiveness  Enhancement Program of Ural Federal University.}}
\author{Hanan Shabana \and Mikhail V. Volkov}%

\titlerunning{Sat solvers for synchronization in partial automata}

\authorrunning{Hanan Shabana, M. V. Volkov}

\tocauthor{Hanan Shabana, M. V. Volkov (Ekaterinburg, Russia)}

\institute{Institute of Natural Sciences and Mathematics\\ Ural Federal University, Lenina 51, 620000 Ekaterinburg, Russia\\
\email{hananshabana22@gmail.com, m.v.volkov@urfu.ru}}

\maketitle

\begin{abstract}
We approach the task of computing a carefully synchronizing word of minimum length for a given partial deterministic automaton, encoding the problem as an instance of SAT and invoking a SAT solver. Our experimental results demonstrate that this approach gives satisfactory results for automata with up to 100 states even if very modest computational resources are used.

\keywords{Nondeterministic automaton, deterministic automaton, partial deterministic automaton, careful synchronization, exact synchronization, carefully synchronizing word, SAT, SAT-solver}
\end{abstract}

\section{Introduction}
\label{sec:intro}

A \emph{nondeterministic finite automaton} (NFA) is a triple $\langle Q,\Sigma,\delta\rangle$, where $Q$ and $\Sigma$ are finite non-empty sets called the \emph{state set} and the \emph{input alphabet} respectively, and $\delta$ is a subset of $Q\times\Sigma\times Q$. The elements of $Q$ and $\Sigma$ are called \emph{states} and \emph{letters}, respectively, and $\delta$ is referred to as the \emph{transition relation}\footnote{The conventional concept of an NFA includes distinguishing two non-empty subsets of $Q$ consisting of \emph{initial} and \emph{final} states. As these play no role in our considerations, the above simplified definition well suffices for the purpose of this paper.}. For each pair $(q,a)\in Q\times\Sigma$, we denote by $\delta(q,a)$ the subset $\{q'\mid (q,a,q')\in\delta\}$ of $Q$; this way $\delta$ can be viewed as a function $Q\times\Sigma\to\mathcal{P}(Q)$, where $\mathcal{P}(Q)$ is the power set of $Q$. When we treat $\delta$ as a function, we refer to it as the \emph{transition function}.

Let $\Sigma^*$ stand for the collection of all finite words over the alphabet $\Sigma$, including the empty word $\varepsilon$. The transition function extends to a function $\mathcal{P}(Q)\times\Sigma^*\to\mathcal{P}(Q)$, still denoted $\delta$, in the following inductive way: for every subset $S\subseteq Q$ and every word $w\in\Sigma^*$, we set
\[
\delta(S,w):=\begin{cases}S &\text{ if } w=\varepsilon, \\
\bigcup_{q\in\delta(S,v)}\delta(q,a)&\text{  if $w=va$ with $v\in\Sigma^*$ and $a\in\Sigma$.}
\end{cases}
\]
(Here the set $\delta(S,v)$ is defined by the induction assumption since $v$ is shorter than $w$.) We say that a word $w\in\Sigma^*$ is \emph{undefined at a state $q\in Q$} if the set $\delta(q,w)$ is empty; otherwise $w$ is said to be \emph{defined at $q$}.

When we deal with a fixed NFA, we suppress the sign of the transition relation, introducing the NFA as the pair $\langle Q,\Sigma\rangle$ rather than the triple $\langle Q,\Sigma,\delta\rangle$ and writing $q\dt w$ for $\delta(q,w)$ and $S\dt w$ for $\delta(S,w)$.

A \emph{partial} (respectively, \emph{complete}) \emph{ deterministic} automaton is an NFA $\langle Q,\Sigma\rangle$ such that $|q\dt a|\le 1$ (respectively, $|q\dt a|=1$) for all $(q,a)\in Q\times\Sigma$. We use the acronyms PFA and CFA for the expressions `partial deterministic automaton' and `complete deterministic automaton', respectively.

A CFA $\mA=\langle Q,\Sigma\rangle$ is called \emph{synchronizing} if there exists a word $w\in\Sigma^*$ whose action leaves the automaton in one particular state no matter at which state in $Q$ it is applied: $q\dt w=q'\dt w$ for all $q,q'\in Q$. Any $w$ with this property is said to be a \emph{synchronizing} word for the automaton.

Synchronizing automata serve as simple yet adequate models of error-resistant systems in many applied areas (system and protocol testing, information coding, robotics). At the same time, \sa\ surprisingly arise in some parts of pure mathematics and theoretical computer science (symbolic dynamics, theory of substitution systems, formal language theory). We refer to the survey~\cite{Vo08} and the chapter~\cite{KV} of the forthcoming `Handbook of Automata Theory' for a discussion of \sa\ as well as their diverse connections and applications. From both applied and theoretical viewpoints, the key question is to find the optimal, i.e., shortest reset word for a given \san. Under standard assumptions of complexity theory, this optimization question is known to be computationally hard; see \cite[Section~2]{KV} for a summary of various hardness results in the area. As it is quite common for hard problems of applied importance, there have been many attempts to develop practical approaches to the question. These approaches have been based on certain heuristics~\cite{Karahoda:2016,Altun:2017,Karahoda:2018} and/or popular techniques, including (but not limiting to) binary decision diagrams~\cite{Pixley&Jeong&Hachtel:1992}, genetic and evolutionary algorithms~\cite{Roman:2009,Kowalski:2017}, satisfiability solvers~\cite{Skvortsov:2011}, answer set programming~\cite{Gunicen:2013}, hierarchical classifiers~\cite{Podolak:2012}, and machine learning~\cite{Podolak:2018}.

The present authors~\cite{ShVo18,Sh18} have initiated an extension to the realm of NFAs of the approach of~\cite{Skvortsov:2011}. Here we consider a more restricted class, namely,  that of PFAs, where studying synchronization issues appears to be much better motivated. While we follow the general strategy of and re-use some technical tricks from~\cite{ShVo18,Sh18}, our present constructions heavily depend on the specifics of partial automata and have not been obtained via specializing the constructions of those earlier papers.

The rest of the paper is structured as follows. In Sect.~\ref{sec:synchro} we describe and motivate the version of PFA synchronization that we have studied. In Sect.~\ref{sec:encoding} we first outline the approach based on satisfiability solvers and then explain in detail how we encode PFAs and their synchronization problems as instances of the Boolean satisfiability problem. In Sect.~\ref{sec:experimental} we  provide samples of our experimental results and conclude in Sect.~\ref{sec:final} with a brief discussion of the future work.

We have tried to make the paper, to a reasonable extent, self-contained, except for a few discussions that involve some basic concepts of computational complexity theory. These concepts can be found, e.g., in the early chapters of the textbook~\cite{Papa}.

\section{Synchonization of NFAs and PFAs}
\label{sec:synchro}

The concept of synchronization of CFAs as defined in Sect.~\ref{sec:intro} was extended to NFAs in several non-equivalent ways. The following three nowadays popular versions were suggested and analyzed in~\cite{Imre99} in 1999 (although, in an implicit form, some of them appeared in the literature much earlier, see, e.g., \cite{Bu76,Goralcik82}). For $i\in\{1,2,3\}$, an NFA $\mathrsfs{A}=\langle Q,\Sigma\rangle$ is called $D_i$-\emph{syn\-chro\-nizing} if there exists a word $w\in\Sigma^*$ that satisfies the condition $(D_i)$ from the list below:
\begin{enumerate}
\itemindent=14pt
\item[$(D_{1})$:]  $|q.w|=|Q.w|=1$ for all $q\in Q$;
\item[$(D_{2})$:]  $q.w=Q.w$ for all $q\in Q$;
\item[$(D_{3})$:]  $\bigcap_{q\in Q}q.w\ne\varnothing$.
\end{enumerate}
Any word satisfying $(D_i)$ is called $D_{i}$-\emph{synchronizing} for $\mA$. The definition readily yields the following properties of $D_{i}$-synchronizing words:
\begin{lem}
\label{lem:di}
$a)$ A $D_1$- or $D_3$-synchronizing word is defined at each state.

$b)$ A $D_2$-synchronizing word is either defined at each state or undefined at each state.

$c)$ Every $D_1$-synchronizing word is both $D_2$- and $D_3$-synchronizing; every $D_2$-synchronizing word defined at each state is $D_3$-synchronizing.
\end{lem}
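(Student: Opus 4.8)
The plan is to read all three parts directly off the definitions, the single structural ingredient being the identity $Q\dt w=\bigcup_{q\in Q}q\dt w$, which follows at once from the inductive definition of the extended transition function. From it I would extract two facts to be used throughout: $q\dt w\subseteq Q\dt w$ for every state $q$, and $Q\dt w=\varnothing$ if and only if $q\dt w=\varnothing$ for all $q\in Q$.

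For part $a)$ I would observe that both hypotheses force every $q\dt w$ to be non-empty: under $(D_1)$ each $q\dt w$ has size $1$, while under $(D_3)$ the non-empty set $\bigcap_{q\in Q}q\dt w$ is contained in every $q\dt w$. Hence in either case $w$ is defined at each state. For part $b)$ I would use the equality $q\dt w=Q\dt w$ supplied by $(D_2)$ to propagate (un)definedness: if some $q_0\dt w$ is empty then $Q\dt w=q_0\dt w=\varnothing$, forcing every $q\dt w$ to be empty, whereas if some $q_0\dt w$ is non-empty then $Q\dt w$ and hence every $q\dt w$ is non-empty. This is exactly the claimed all-or-nothing dichotomy.

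For part $c)$ I would first take a $D_1$-synchronizing word $w$ and note that the singleton $q\dt w$ lies inside the singleton $Q\dt w$, so the two coincide for each $q$; this is condition $(D_2)$, and since all the sets $q\dt w$ then equal the common non-empty set $Q\dt w$, their intersection is $Q\dt w\ne\varnothing$, which is $(D_3)$. For a $D_2$-synchronizing word $w$ that is defined at each state I would again invoke $q\dt w=Q\dt w\ne\varnothing$ for all $q$ to conclude $\bigcap_{q\in Q}q\dt w=Q\dt w\ne\varnothing$. I do not anticipate a genuine obstacle here; the only point worth flagging is the hypothesis ``defined at each state'' in the last assertion, which by part $b)$ is precisely what rules out the degenerate case $Q\dt w=\varnothing$, in which $(D_3)$ would fail.
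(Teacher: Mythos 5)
Your proof is correct, and it follows the same route the paper intends: the paper states this lemma without proof as something the definitions ``readily yield,'' and your direct verification via the identity $Q\dt w=\bigcup_{q\in Q}q\dt w$ is exactly the intended argument. Nothing to add.
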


In~\cite{ShVo18} we adapted the approach based on satisfiability solvers to finding $D_3$-\ssws\ for NFAs. The first-named author used a similar method in the cases of
$D_1$- and $D_2$-synchronization; results related to $D_2$-synchronization were reported in~\cite{Sh18}.

Yet another version of synchronization for NFAs was introduced in~\cite{Ito04} and systematically studied in~\cite{Martyugin08,Martyugin10,Martyugin12,Martyugin13,Martyugin14}, which terminology we adopt. An NFA $\mathrsfs{A}=\langle Q,\Sigma\rangle$ is called \emph{carefully synchronizing} if there is a word $w=a_1\cdots a_\ell$, with $a_1,\dots,a_\ell\in\Sigma$, that satisfies the condition $(C)$, being the conjunction of $(C1)$--$(C3)$ below:
\begin{enumerate}
\itemindent=14pt
\item[$(C1)$:] the letter $a_1$ is defined at every state in $Q$;
\item[$(C2)$:] the letter $a_t$ with $1<t\le \ell$ is defined at every state in $Q.a_1\cdots a_{t-1}$,
\item[$(C3)$:] $|Q.w|=1$.
\end{enumerate}
Any $w$ satisfying $(C)$ is called a \emph{carefully synchronizing word} (\csw., for short) for $\mA$. Thus, when a \csw. is applied at any state in $Q$, no undefined transition occurs during the course of application. Every carefully synchronizing word is clearly $D_{1}$-synchronizing but the converse is not true in general; moreover, a $D_{1}$-synchronizing NFA may admit no \csw.

In this paper we focus on \csws\ for PFAs. There are several theoretical and practical reasons for this.

On the theoretical side, it is easy to see that each of the conditions $(C)$, $(D_{1})$, $(D_{3})$ leads to the same notion when restricted to PFAs. As for $D_2$-synchronization, if a word $w$ is $D_{2}$-synchronizing for a PFA $\mA$, then $w$ carefully synchronizes $\mA$ whenever $w$ is defined at each state. Otherwise $w$ is nowhere defined by Lemma~\ref{lem:di}b, and such `annihilating' words are nothing but usual \sws\ for the CFA obtained from $\mA$ by adding a new sink state and making all transitions undefined in $\mA$ lead to this sink state. Synchronization of CFAs with a sink state is relatively well understood (see~\cite{Ry97}), and therefore, we may conclude that $D_2$-synchronization also reduces to careful synchronization in the realm of PFAs. On the other hand, there exists a simple transformation that converts every NFA $\mathrsfs{A}=\langle Q,\Sigma\rangle$ into a PFA $\mathrsfs{B}=\langle Q,\Sigma'\rangle$ such that $\mA$ is $D_{3}$-synchronizing if and only if so is $\mB$ and the minimum lengths of $D_{3}$-synchronizing words for $\mA$ and $\mB$ are equal; see \cite[Lemma~8.3.8]{Ito} and \cite[Lemma~3]{Ito08}. These observations demonstrate that from the viewpoint of optimal synchronization, studying \csws\ for PFAs may provide both lower and upper bounds applicable to arbitrary NFAs and all aforementioned kinds of synchronization.

Probably even more important is the fact that careful synchronization of PFAs is relevant in numerous applications. Due to the page limit, we mention only two examples here.

In industrial robotics, \sa\ are widely used to design feeders, sorters, and orienters that work with flows of certain objects carried by a conveyer. The goal is achieved by making the flow encounter passive obstacles placed appropriately along the conveyer belt; see \cite{Na86,Na89} for the origin of this automata approach and \cite{AnVo04} for an illustrative example. Now imagine that the objects to be oriented or sorted have a fragile side that could be damaged if hitting an obstacle. In order to prevent any damage, we have to forbid `dangerous' transitions in the automaton modelling the orienter/sorter so that the automaton becomes partial and the obstacle sequences must correspond to \csws. (Actually, the term `careful synchronization' has been selected with this application in mind.)

Our second example relates to so-called synchronized codes\footnote{We refer the reader to \cite[Chapters~3 and~10]{Berstel&Perrin&Reutenauer:2009} for a detailed account of profound connections between codes and automata.}. Recall that a \emph{prefix code} over a finite alphabet $\Sigma$ is a set $X$ of words in $\Sigma^*$ such that no word of $X$ is a prefix of another word of $X$. Decoding of a finite prefix code $X$ over $\Sigma$ can be implemented by a finite deterministic automaton $\mA_X$ whose state $Q$ is the set of all proper prefixes of the words in $X$ (including the empty word $\varepsilon$) and whose transitions are defined as follows: for $q\in Q$ and $a\in\Sigma$,
\begin{displaymath}
q\dt a =\begin{cases} qa & \text{if $qa$ is a proper prefix of a word of $X$},\\
\varepsilon & \text{if $qa \in X$},\\
\text{undefined} & \text{otherwise}.
\end{cases}
\end{displaymath}
In general, $\mA_X$  is a PFA (it is complete if and only if the code $X$ is not contained in another prefix code over $\Sigma$). It can be shown that if $\mA_X$ is carefully synchronizing, the code $X$ has a very useful property: whenever a loss of synchronization between the decoder and the coder occurs (because of a channel error), it suffices to transmit a \csw. $w$ of $\mA_X$ such that $w$ sends all states in $Q$ to the state $\varepsilon$ to ensure that the following symbols will be decoded correctly.

We may conclude that the problems of determining whether or not a given PFA is carefully synchronizing and of finding its shortest \csws\ are both natural and important. The bad news is that these problems turn out to be quite difficult: it is known that the first problem is PSPACE-complete and that the minimum length of \csws\ for carefully synchronizing PFAs can be exponential as a function of the number of states. (These results were found in~\cite{Rystsov:1980,Rystsov:1983} and later rediscovered and strengthened in~\cite{Martyugin12}.) Thus, one has to use  some tools that have proved to be efficient for dealing with computationally hard problems.  As mentioned in Section~\ref{sec:intro}, in this paper we make an attempt to employ a satisfiability solver as such a tool.

\section{Encoding}
\label{sec:encoding}

For completeness, recall the formulation of the Boolean satisfiability problem (SAT). An instance of SAT is a pair $(V,C)$, where $V$ is a set of Boolean variables and $C$ is a collection of clauses over $V$. (A \emph{clause} over $V$ is a disjunction of literals and a \emph{literal} is either a variable in $V$ or the negation of a variable in~$V$.) Any \emph{truth assignment} on $V$, i.e., any map $\varphi\colon V\to\{0,1\}$, extends to a map $C\to\{0,1\}$ (still denoted by $\varphi$) via the usual rules of propositional calculus: $\varphi(\neg x)=1-\varphi(x)$, $\varphi(x\vee y)=\max\{\varphi(x),\varphi(y)\}$. A truth assignment $\varphi$ \emph{satisfies} $C$ if $\varphi(c)=1$ for all $c\in C$. The answer to an instance $(V,C)$ is YES if $(V,C)$ has a \emph{satisfying assignment} (i.e., a truth assignment on $V$ that satisfies $C$) and NO otherwise.

By Cook's classic theorem (see, e.g., \cite[Theorem 8.2]{Papa}), SAT is NP-com\-plete, and by the very definition of NP-completeness, every problem in NP reduces to SAT. On the other hand, over the last score or so, many efficient \emph{SAT-solvers}, i.e., specialized programs designed to solve instances of SAT have been developed. Modern SAT solvers can solve instances with hundreds of thousands of variables and millions of clauses within a few minutes. Due to this progress, the following approach to computationally hard problems has become quite popular nowadays: one encodes instances of such problems into instances of SAT that are then fed to a SAT-solver\footnote{We refer the reader to the survey~\cite{HKR} a detailed discussion of the approach and examples of its successful applications in various areas.}. It is exactly the strategy that we want to apply.

We start with the following problem:

\smallskip

\noindent CSW (the existence of a \csw. of a given length):

\noindent\textsc{Input}: a PFA $\mathrsfs{A}$ and a positive integer $\ell$ (given in unary);

\noindent\textsc{Output}: YES if $\mathrsfs{A}$ has a \csw. of length $\ell$;\\\phantom{\textsc{Output}:} NO otherwise.

\smallskip

\noindent We have to assume that the integer $\ell$ is given in unary because with $\ell$ given in binary, a polynomial time reduction from CSW to SAT is hardly possible. (Indeed, it easily follows from~\cite{Martyugin12} that the version of CSW in which the integer parameter is given in binary is PSPACE-hard, and the existence of a polynomial reduction from a PSPACE-hard problem to SAT would imply that the polynomial hierarchy collapses at level~1.) In contrast, the version of CSW with the unary integer parameter is easily seen to belong to NP: given an instance $(\mathrsfs{A}=\langle Q,\Sigma\rangle,\ell)$ of CSW in this setting, guessing a word $w\in\Sigma^*$ of length $\ell$ is legitimate. Then one just checks whether or not $w$ is carefully synchronizing for $\mathrsfs{A}$, and time spent for this check is clearly polynomial in the size of $(\mathrsfs{A},\ell)$.

Now, given an arbitrary instance $(\mathrsfs{A},\ell)$ of CSW, we construct an instance $(V,C)$ of SAT such that the answer to $(\mathrsfs{A},\ell)$ is YES if and only if so is the answer to $(V,C)$. In the following presentation of our encoding, precise definitions and statements are interwoven with less formal comments explaining the `physical' meaning of variables and clauses.

So, take a PFA $\mathrsfs{A}=\langle Q,\Sigma\rangle$ and an integer $\ell>0$. Denote the sizes of $Q$ and $\Sigma$ by $n$ and $m$ respectively, and fix some numbering of these sets so that $Q=\{q_1,\dots,q_n\}$ and $\Sigma=\{a_1,\dots,a_m\}$.

We start with introducing the variables used in the instance $(V,C)$ of SAT that encodes $(\mathrsfs{A},\ell)$. The set $V$ consists of two sorts of variables: $m\ell$ \emph{letter variables} $x_{i,t}$ with $1\le i\le m$, $1\le t\le\ell$, and $n(\ell+1)$ \emph{state variables} $y_{j,t}$ with $1\le j\le n$, $0\le t\le\ell$. We use the letter variables to encode the letters of a hypothetical \csw. $w$ of length $\ell$: namely, we want the value of the variable $x_{i,t}$ to be 1 if and only if the $t$-th letter of $w$ is $a_i$. The intended meaning of the state variables is as follows: we want the value of the variable $y_{j,t}$ to be 1 whenever the state $q_j$ belongs to the image of $Q$ under the action of the prefix of $w$ of length $t$, in which situation we say that $q_j$ \emph{is active after $t$ steps}. We see that the total number of variables in $V$ is $m\ell+n(\ell+1)=(m+n)\ell+n$.

Now we turn to constructing the set of clauses $C$. It consists of four groups. The group $I$ of \emph{initial clauses} contains $n$ one-literal clauses $y_{j,0}$, $1\le j\le n$, and expresses the fact that all states are active after 0 steps.

For each $t=1,\dots,\ell$, the group $L$ of \emph{letter clauses} includes the clauses
\begin{equation}
\label{eq:only one letter}
x_{1,t}\vee\dots\vee x_{m,t},\quad \neg x_{r,t}\vee\neg x_{s,t},\ \text{ where }\ 1\le r<s\le m.
\end{equation}
Clearly, the clauses \eqref{eq:only one letter} express the fact that the $t$-th position of our hypothetical \csw. $w$ is occupied by exactly one letter in $\Sigma$. Altogether, $L$ contains $\ell\left(\frac{m(m-1)}2+1\right)$ clauses.

For each $t=1,\dots,\ell$ and each triple $(q_j,a_i,q_k)$ in the transition relation of $\mA$, the group $T$ of \emph{transition clauses} includes the clause
\begin{equation}
\label{eq:transition}
\neg y_{j,t-1}\vee\neg x_{i,t}\vee y_{k,t}.
\end{equation}
Invoking the basic laws of propositional logic, one sees that the clause \eqref{eq:transition} is equivalent to the implication
$y_{j,t-1}\mathop{\&} x_{i,t}\to y_{k,t},$
that is, \eqref{eq:transition} expresses the fact that if the state $q_j$ has been active after $t-1$ steps and $a_i$ is the $t$-th letter of $w$, then the state $q_k=q_j\dt a_i$ becomes active after $t$ steps. Further, for each $t=1,\dots,\ell$ and each pair $(q_j,a_i)$ such that $a_i$ is undefined at $q_j$ in $\mA$, we add to $T$ the clause
\begin{equation}
\label{eq:undefined}
\neg y_{j,t-1}\vee\neg x_{i,t}.
\end{equation}
The clause is equivalent to the implication
$y_{j,t-1}\to\neg x_{i,t},$
and thus, it expresses the requirement that the letter $a_i$ should not be occur in the $t$-th position of $w$ if $q_j$ has been active after $t-1$ steps. Obviously, this corresponds to the conditions $(C1)$ (for $t=0$) and $(C2)$ (for $t>0$) in the definition of careful synchronization. For each $t=1,\dots,\ell$ and each pair $(q_j,a_i)\in Q\times\Sigma$, exactly one of the clauses \eqref{eq:transition} or \eqref{eq:undefined} occurs in $T$, whence $T$ consists of $\ell mn$ clauses.

The final group $S$ of of \emph{synchronization clauses} includes the clauses
\begin{equation}
\label{eq:only one state}
\neg y_{r,\ell}\vee\neg y_{s,\ell},\ \text{ where }\ 1\le r<s\le n.
\end{equation}
The clauses \eqref{eq:only one state} express the requirement that at most one state remains active when the action of the word $w$ is completed, which corresponds to the condition $(C3)$ from the definition of careful synchronization. The group $S$ contains $\frac{n(n-1)}2$ clauses.

Summing up, the number of clauses in $C:=I\cup L\cup T\cup S$ is
\begin{multline}
\label{eq:number}
n+\ell\left(\tfrac{m(m-1)}2+1\right)+\ell mn+\tfrac{n(n-1)}2=\\
\ell\left(\tfrac{m(m-1)}2+mn+1\right)+\tfrac{n(n+1)}2.
\end{multline}
In comparison with encodings used in our earlier papers~\cite{ShVo18,Sh18}, the encoding suggested here produces much smaller SAT instances. Since in the applications the size
of the input alphabet is a (usually small) constant, the leading term in \eqref{eq:number} is $\Theta(\ell n)$ while the restriction to PFAs of the encodings from~\cite{ShVo18,Sh18} has $\Theta(\ell n^2)$ clauses.

\begin{thm}
\label{thm:reduction}
A PFA $\mathrsfs{A}$ has a \csw. of length $\ell$ if and only if the instance $(V,C)$ of SAT constructed above is satisfiable. Moreover, the \csws\ of length $\ell$ for $\mathrsfs{A}$ are in a 1-1 correspondence with the restrictions of satisfying assignments of $(V,C)$ to the letter variables.
\end{thm}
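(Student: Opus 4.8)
The plan is to verify the correctness of the encoding by establishing both implications of the stated equivalence and then to read off the claimed bijection from the two constructions involved. Throughout, for a word $w=a_{i_1}\cdots a_{i_\ell}$ of length $\ell$ I write $A_t:=Q\dt(a_{i_1}\cdots a_{i_t})$ for the image of $Q$ after $t$ steps (so $A_0=Q$), and for a truth assignment $\varphi$ I write $Y_t:=\{q_j\mid\varphi(y_{j,t})=1\}$ for the set of states declared active after $t$ steps. The whole argument amounts to comparing these two families of sets.

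For the ``only if'' direction I would start from a \csw.\ $w=a_{i_1}\cdots a_{i_\ell}$ and define the canonical assignment $\varphi$ by $\varphi(x_{i,t})=1$ exactly when $a_i=a_{i_t}$, and $\varphi(y_{j,t})=1$ exactly when $q_j\in A_t$. It then remains to check each of the four clause groups. The initial clauses hold because $A_0=Q$; the letter clauses \eqref{eq:only one letter} hold because each position carries exactly one letter; the transition clauses \eqref{eq:transition} hold because $A_t$ is by construction closed under the action of $a_{i_t}$; the clauses \eqref{eq:undefined} hold because $(C1)$--$(C2)$ guarantee that $a_{i_t}$ is defined at every state of $A_{t-1}$; and the synchronization clauses \eqref{eq:only one state} hold because $(C3)$ forces $|A_\ell|=1$. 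This direction is routine.

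The ``if'' direction is the delicate one. Given a satisfying assignment $\varphi$, the letter clauses force, for each $t$, a unique index $i(t)$ with $\varphi(x_{i(t),t})=1$, so $\varphi$ determines a word $w=a_{i(1)}\cdots a_{i(\ell)}$, and I must show $w$ is carefully synchronizing. The subtlety is that the encoding is \emph{one-sided}: clauses \eqref{eq:transition} force activeness to propagate forward but never force a $y$-variable to be $0$, so the declared set $Y_t$ may strictly overestimate the true image $A_t$. The remedy is to prove, by induction on $t$, the inclusion $A_t\subseteq Y_t$ \emph{together with} non-emptiness $A_t\ne\varnothing$. The base case $A_0=Q=Y_0$ comes from the initial clauses. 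In the induction step, clauses \eqref{eq:undefined} guarantee that $a_{i(t)}$ is defined at every state of $Y_{t-1}\supseteq A_{t-1}$, which keeps $A_t=A_{t-1}\dt a_{i(t)}$ non-empty and, via clauses \eqref{eq:transition}, places every state of $A_t$ into $Y_t$. Hence $A_\ell\subseteq Y_\ell$ and $A_\ell\ne\varnothing$, while the synchronization clauses give $|Y_\ell|\le1$; together these force $|A_\ell|=1$. Since along the way every applied letter was defined on the whole current image, conditions $(C1)$--$(C3)$ all hold and $w$ is a \csw.

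I expect the one-sided nature of the transition clauses to be the main obstacle: one cannot simply identify $Y_t$ with the real image, and must instead carry non-emptiness of $A_t$ as a separate invariant, obtained from \eqref{eq:undefined}, in order to upgrade $A_\ell\subseteq Y_\ell$ and $|Y_\ell|\le1$ to $|A_\ell|=1$. For the ``moreover'' clause I would consider the map sending a \csw.\ $w$ to the restriction to the letter variables of its canonical assignment. This map is injective because distinct words differ in some position and hence in the value of some letter variable, and its image is exactly the set of restrictions of satisfying assignments: the ``only if'' direction shows each \csw.\ yields such a restriction, while the ``if'' direction shows that every satisfying assignment restricts to the letter-variable encoding of some \csw. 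This yields the desired 1-1 correspondence and completes the proof.
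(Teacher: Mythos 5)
Your proposal is correct and follows essentially the same route as the paper: the same canonical assignment for the forward direction, and for the converse the same simultaneous induction establishing non-emptiness of the true image $A_t$ (the paper's Claim~1) together with the inclusion $A_t\subseteq Y_t$ (the paper's Claim~2), from which $(C1)$--$(C3)$ follow exactly as you describe. Your explicit observation that the transition clauses are ``one-sided'' and may let $Y_t$ overestimate $A_t$ is precisely the subtlety the paper's two claims are designed to handle.
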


\begin{proof}
Suppose that $\mathrsfs{A}$ has a \csw. of length $\ell$. We fix such a word $w$ and denote by $w_t$ its prefix of length $t=1,\dots,\ell$. Define a truth assignment $\varphi\colon V\to\{0,1\}$ as follows: for $1\le i\le m$, $0\le j\le n$, $1\le t\le\ell$, let
\begin{align}
\label{eq:letter variables}
\varphi(x_{i,t})&:=\begin{cases}
1 &\text{if the $t$-th letter of $w$ is $a_i$,}\\
0 &\text{otherwise;}
\end{cases}\\
\label{eq:initial variables}
\varphi(y_{j,0})&:=1;\\
\label{eq:state variables}
\varphi(y_{j,t})&:=\begin{cases}
1 &\text{if the state $q_j$ lies in $Q\dt w_t$,}\\
0 &\text{otherwise.}
\end{cases}
\end{align}
In view of \eqref{eq:letter variables} and \eqref{eq:initial variables}, $\varphi$ satisfies all clauses in $L$ and respectively $I$. As $w_\ell=w$ and $|Q\dt w|=1$, we see that \eqref{eq:state variables} ensures that $\varphi$ satisfies all clauses in $S$. It remains to analyze the clauses in $T$. For each fixed $t=1,\dots,\ell$, these clauses are in a 1-1 correspondence with the pairs in $Q\times\Sigma$. We fix such a pair $(q_j,a_i)$, denote the clause corresponding to  $(q_j,a_i)$ by $c$ and consider three cases.

\emph{\textbf{Case 1}: the letter $a_i$ is not the $t$-th letter of $w$}. In this case $\varphi(x_{i,t})=0$ by \eqref{eq:letter variables}, and hence, $\varphi(c)=1$ since the literal $\neg x_{i,t}$ occurs in $c$, independently of $c$ having the form \eqref{eq:transition} or \eqref{eq:undefined}.

\emph{\textbf{Case 2}: the letter $a_i$ is the $t$-th letter of $w$ but it is undefined at $q_j$}. In this case the clause $c$ must be of the form \eqref{eq:undefined}. Observe that $t>1$ in this case since the first letter of the \csw. $w$ must be defined at each state in $Q$. Moreover, the state $q_j$ cannot belong to the set $Q\dt w_{t-1}$ because $a_i$ must be defined at each state in this state. Hence $\varphi(y_{j,t-1})=0$ by \eqref{eq:state variables}, and $\varphi(c)=1$ since the literal $\neg y_{j,t-1}$ occurs in $c$.

\emph{\textbf{Case 3}: the letter $a_i$ is the $t$-th letter of $w$ and it is defined at $q_j$}. In this case the clause $c$ must be of the form \eqref{eq:transition}, in which the literal $y_{k,t}$ corresponds to the state $q_k=q_j\dt a_i$. If the state $q_j$ does not belong to the set $Q\dt w_{t-1}$, then as in the previous case, we have $\varphi(y_{j,t-1})=0$ and $\varphi(c)=1$. If $q_j$ belongs to $Q\dt w_{t-1}$, then the state $q_k$ belongs to the set $(Q\dt w_{t-1})\dt a_i=Q\dt w_t$, whence $\varphi(y_{k,t})=1$ by \eqref{eq:state variables}. We conclude that $\varphi(c)=1$ since the literal $y_{k,t}$ occurs in $c$.

\smallskip

Conversely, suppose that $\varphi\colon V\to\{0,1\}$ is a satisfying assignment for $(V,C)$. Since $\varphi$ satisfies the clauses in $L$, for each $t=1,\dots,\ell$, there exists a unique $i\in\{1,\dots,m\}$ such that $\varphi(x_{i,t})=1$. This defines a map $\chi\colon\{1,\dots,\ell\}\to\{1,\dots,m\}$. Let $w:=a_{\chi(1)}\cdots a_{\chi(\ell)}$. We aim to show that $w$ is a \csw. for $\mA$, i.e., to verify that $w$ fulfils the conditions $(C1)$--$(C3)$ from the definition of a \csw. For this, we first prove two auxiliary claims. Recall that a state is said to be active after $t$ steps if it lies in $Q\dt w_t$, where, as above, $w_t$ is the length $t$ prefix of the word $w$. (By the length 0 prefix we understand the empty word $\varepsilon$.)

\smallskip

\emph{\textbf{Claim 1}. For each $t=0,1,\dots,\ell$, there are states active after $t$ steps}.

\emph{\textbf{Claim 2}. If a state $q_k$ is active after $t$ steps, then $\varphi(y_{k,t})=1$}.

We prove both claims simultaneously by induction on $t$. The induction basis $t=0$ is guaranteed by the fact that all states are active after 0 steps and $\varphi$ satisfies the clauses in $I$. Now suppose that $t>0$ and there are states active after $t-1$ steps. Let $q_r$ be such a state. Then $\varphi(y_{r,t-1})=1$ by the induction assumption. Let $i:=\chi(t)$, that is, $a_i$ is the $t$-th letter of the word $w$. Then $\varphi(x_{i,t})=1$, whence $\varphi$ cannot satisfy the clause of the form \eqref{eq:undefined} with $j=r$. Hence this clause cannot appear in $T$ as $\varphi$ satisfies the clauses in $T$. This means that the letter $a_i$ is defined at $q_r$ in $\mA$, and the state $q_s:=q_r\dt a_i$ is active after $t$ steps. Claim 1 is proved.

Now let $q_k$ be an arbitrary state that is active after $t>0$ steps. Since $a_i$ is the $t$-th letter of $w$, we have $Q\dt w_t=(Q\dt w_{t-1})\dt a_i$, whence
$q_k=q_j\dt a_i$ for same $q_j\in Q\dt w_{t-1}$. Therefore the clause \eqref{eq:transition} occurs in $T$, and thus, it is satisfied by $\varphi$. Since $q_j$ is active after $t-1$ steps, $\varphi(y_{j,t-1})=1$ by the induction assumption; besides that, $\varphi(x_{i,t})=1$. We conclude that in order to satisfy \eqref{eq:transition}, the assignment $\varphi$ must fulfil $\varphi(y_{k,t})=1$. This completes the proof of Claim~2.

\smallskip

We turn to prove that the word $w$ fulfils $(C1)$ and $(C2)$. This amounts to verifying that for each $t=1,\dots,\ell$, the $t$-th letter of the word $w$ is defined at every state $q_j$ that is active after $t-1$ steps. Let, as above, $a_i$ stand for the $t$-th letter of $w$. If $a_j$ were undefined at $q_j$, then by the definition of the set $T$ of transition clauses, this set would include the corresponding clause \eqref{eq:undefined}. However, $\varphi(x_{i,t})=1$ by the construction of $w$ and $\varphi(y_{j,t-1})=1$ by Claim~2. Hence $\varphi$ does not satisfy this clause while the clauses from $T$ are satisfied by $\varphi$, a contradiction.

Finally, consider $(C3)$. By Claim~1, some state is active after $\ell$ steps. On the other hand, the assignment $\varphi$ satisfies the clauses in $S$, which means that $\varphi(y_{j,\ell})=1$ for at most one index $j\in\{1,\dots,n\}$. By Claim~2 this implies that at most one state is active after $\ell$ steps. We conclude that exactly one state is active after $\ell$ steps, that is, $|Q\dt w|=1$.
\qed
\end{proof}

\section{Experimental results}
\label{sec:experimental}
We have successfully applied the encoding constructed in Sect.~\ref{sec:encoding} to solve CSW instances with the help of a SAT-solver. As in~\cite{Skvortsov:2011,Gunicen:2013,ShVo18,Sh18}, we have used MiniSat 2.2.0~\cite{Minisat,Minisat_page}. In order to find a \csw. of minimum length for a given PFA $\mA$, we have considered CSW instances $(\mA,\ell)$ with fixed $\mA$ and performed binary search on $\ell$.  Even though our encoding is different form those we used in \cite{ShVo18,Sh18}, it shares with them the following useful feature: when presented in DIMACS CNF format, the `primary' SAT instance that encodes the CSW instance $(\mA,1)$ can be easily scaled to the SAT instances that encode the CSW instances $(\mA,\ell)$ with any value of~$\ell$. Due to this feature, one radically reduces time needed to prepare the input data for the SAT-solver. We refer the reader to \cite[Sect.~3]{Sh18} for a detailed explanation of the trick and an illustrative example.

We implemented the algorithm outlined above in C++ and compiled with GCC 4.9.2. In our experiments we used a personal computer with an Intel(R) Core(TM) i5-2520M processor with 2.5 GHz CPU and 4GB of RAM. The code can be found at \url{https://github.com/hananshabana/SynchronizationChecker}.

As a sample of our experimental findings, we present here our results on synchronization of PFAs with a unique undefined transition. Observe that the problem of deciding whether or not a given PFA is carefully synchronizing remains PSPACE-complete even if restricted to this rather special case \cite{Martyugin12}. We considered random PFAs with $n\le 100$ states and two input letters. The condition $(C1)$ in the definition of a carefully synchronizing PFA implies that such a PFA must have an everywhere defined letter. We denoted this letter by $a$ and the other letter, called $b$, was chosen to be undefined at a unique state. Further, it is easy to see that for a PFA $\langle Q,\{a,b\}\rangle$ with $a,b$ so chosen to be carefully synchronizing, it is necessary that $|Q\dt a|<|Q|$. Therefore, we fixed a state $q_a\in Q$ and then selected $a$ uniformly at random from all $n^{n-1}$ maps $Q\to Q\setminus\{q_a\}$. Similarly, to ensure there is a unique undefined transition with $b$, we fixed a state $q_b\in Q$ (not necessarily different from $q_a$) and then selected $b$ uniformly at random from all $(n-1)^n$ maps $Q\setminus\{q_b\}\to Q$. For each fixed $n$, we generated up to 1000 random PFAs this way and calculated the average length $\ell(n)$ of their shortest \csws. We used the least squares method to find a function that best reflects how $\ell(n)$ depends on $n$, and it turned out that our results are reasonably well approximated by the following expression:
\[
\ell(n)\approx 3.92+0.49n-0.005n^2+0.000024n^3.
\]

\begin{tikzpicture}
\begin{axis}[
xlabel= {Number of states $n$},
ylabel={$\ell(n)$},
domain=10:100,
legend pos=north west
]
\addplot+[mark=.]
plot coordinates {
	(10,7.480)
	(15,9.790)
	(17,10.680)
	(20,11.610)
	(23,12.580)
	(25,13.150)
	(28,14.010)
	(30,14.410)
	(35,15.660)
	(40,16.770)
	(45,17.790)
	(50,18.870)
	(60,20.600)
	(70,22.220)
	(80,23.820)
	(90,25.040)
	(100,26.550)
	(55,19.750)
	(65,21.280)
		
};
\addlegendentry{Observed}

\addplot +[mark=*][domain=10:100] {3.92+(0.49*x)+(-0.005*x^2)+(0.000024*x^3)};

\addlegendentry{Our estimation}
\end{axis}
\end{tikzpicture}

The next graph shows the relation between the relative standard deviation of our datasets and the number of states. We see that the relative standard deviation gradually decreases as the number of states grows.

\medskip

\begin{tikzpicture}
\begin{axis}[
xlabel= {Number of states },
ylabel={Relative standard deviation},
legend pos=outer north east
]
\addplot+[mark=*]
plot coordinates {
	(10,0.38159225)
	(15,0.318618803)
	(17,0.304359789)
	(20,0.272500964)
	(23,0.264849967)
	(25,0.255621131)
	(28,0.236635543)
	(30,0.235014587)
	(35,0.220527863)
	(40,0.208323052)
	(45,0.200001709)
	(50,0.192410953)
	(55,0.187375028)
	(60,0.185061504)
	(65,0.167199559)
	(70,0.171558502)
	(80,0.167102896)
	(90,0.161681628)
	(100,0.152451951)
};
\end{axis}
\end{tikzpicture}

We performed similar experiments with random PFAs that have two or three undefined transition. We also tested our algorithm on PFAs from the series $\mathrsfs{P}_n$ suggested in~\cite{dBDZ1}. The state set of $\mathrsfs{P}_n$ is $\{1,2,\dots,n\}$, $n\ge 3$, on which the input letters $a$ and $b$ act as follows:
\[
q.a:=
     \begin{cases}
   q+1  &\text{if }  q=1,2, \\
     q  &\text{if }  q=3,\dots,n;
    \end{cases}\quad
q.b:=
    \begin{cases}
 \text{undefined} &\text{if } q=1, \\
  q+1  & \text{if } q=2,\dots,n-1,\\
  1 &\text{if } q=n.
    \end{cases}
\]
We examined all automata $\mathrsfs{P}_n$ with $n=4,5,\dots,11$, and for each of them, our result matched the theoretical value predicted by~\cite[Theorem~3]{dBDZ1}. The time consumed ranges from 0.301 sec for $n=4$ to 4303 sec for $n=11$. Observe that in the latter case the shortest \csw. has length 116 so that honest binary search started with $(\mathrsfs{P}_{11},1)$ required 11 iterations.

We made also a comparison with the only approach to computing \csws\ of minimum length that we had found in the literature, namely, the approach based on partial power automata. Given a PFA $\mA=\langle Q,\Sigma\rangle$, its \emph{partial power automaton} $\mathcal{P}(\mA)$ has the subsets of $Q$ as the states, the same input alphabet $\Sigma$, and the transition function defined as follows: for each $a\in\Sigma$ and each $P\subseteq Q$,
\[
P{\dt}a:=\begin{cases}
    \{q{\dt}a\mid q\in P\} & \text{provided $q{\dt}a$ is defined for all $q\in P$}, \\
    \text{undefined} & \text{otherwise}.
  \end{cases}
\]
It is easy to see that  $w\in\Sigma^*$ is a \csw. of minimum length for $\mA$ if and only if $w$ labels a minimum length path in $\mathcal{P}(\mA)$ starting at $Q$ and ending at a singleton. Such a path can be found by breadth-first search in the underlying digraph of $\mathcal{P}(\mA)$.

The result of the comparison are presented in the picture below. In this experiment we had to restrict to PFAs with at most 16 states since beyond this number of states, our implementation of the method based on partial power automata could not complete the computation due to memory restrictions (recall that we used rather modest computational resources). However, we think that the exhibited data suffice to demonstrate that the approach based on SAT-solvers shows a by far better performance.

\medskip

\begin{tikzpicture}
\begin{axis}[
xlabel= {Number of states $n$},
ylabel={time (sec)},
legend pos=north west
]
\addplot+[mark=*]
plot coordinates {
	(6,0.82)
	(7,1.26)
	(8,1.2)
	(9,2.059)
	(10,2.68)
	(11,3.2)
	(12,4.12)
	(13,5.6)
	(14,5.93)
	(15,7.36)
	(16,9.37)
	
};
\addlegendentry{SAT }

\addplot+[mark=*]
plot coordinates {
	(6,0.59)
	(7,0.72)
	(8,1.2)
	(9,2.32)
	(10,4.45)
	(11,7.06)
	(12,17.41)
	(13,40.08)
	(14,76.3)
	(15,160)
	(16,349.7)

};
\addlegendentry{Partial power automaton}
\end{axis}
\end{tikzpicture}

\section{Conclusion and future work}
\label{sec:final}

We have presented an attempt to approach the problem of computing a \csw. of minimum length for a given PFA via the SAT-solver method. For this, we have developed a new encoding, which, in comparison with encodings used in our earlier papers~\cite{ShVo18,Sh18}, requires a more sophisticated proof but leads to more economic SAT instances.

We plan to continue our experiments. In particular, it is interesting to compare the minimum lengths of a synchronizing word for a synchronizing DFA and of \csws\ for PFAs that can be obtained from the DFA by removing one or more of its transitions.

We also plan to extend the  SAT-solver approach to so-called \emph{exact synchronization} of PFAs which is of interest for certain applications.

\end{document}